
\documentclass[letterpaper, 10 pt, conference]{ieeeconf}  

\IEEEoverridecommandlockouts                              
\overrideIEEEmargins

\usepackage{graphics} 
\usepackage{epsfig} 
\usepackage{times} 
\usepackage{amsmath} 
\usepackage{amssymb, bm}  
\usepackage[capitalise]{cleveref}
\usepackage{xcolor}
\usepackage{algorithm}
\usepackage{algpseudocode}
\usepackage{subfigure,color}
\newtheorem{rmk}{Remark}
\newtheorem{assump}{Assumption}

\newtheorem{definition}{Definition}
\newtheorem{thm}{Theorem}
\newtheorem{lemma}{Lemma}

\title{\LARGE \bf
	Learning-based Parameterized Barrier Function for Safety-Critical Control of Unknown Systems
}


\author{
    Sihua Zhang$^{1}$, Di-Hua Zhai$^{1*}$, Xiaobing Dai$^{2}$, Tzu-yuan Huang$^{2}$, Yuanqing Xia$^{1}$, Sandra Hirche$^{2}$
    \thanks{$^*$ Corresponding author.}
	\thanks{This work was supported in part by the National Natural Science
		Foundation of China under Grant 62173035, Grant 61803033 and Grant 61836001, in part by the European Union’s Horizon Europe innovation action programme under grant agreement No. 101093822, "SeaClear2.0", and by the DAAD programme Konrad Zuse Schools of Excellence in Artificial Intelligence. 
    }
	\thanks{Sihua Zhang, Di-Hua Zhai and Yuanqing Xia are with School of Automation, Beijing Institute of Technology, Beijing, China. \{sihua.zhang, zhaidih, xia\_yuanqing\}@bit.edu.cn.}%
	\thanks{Xiaobing Dai, Tzu-yuan Huang and Sandra Hirche are with the Chair of Information-oriented Control, TUM School of Computation, Information and Technology, Technical University of Munich, Munich, Germany. \{xiaobing.dai, tzu-yuan.huang, hirche\}@tum.de.}%
}

\begin{document}

	\maketitle
	\thispagestyle{empty}
	\pagestyle{empty}

	\begin{abstract}
		With the increasing complexity of real-world systems and varying environmental uncertainties, it is difficult to build an accurate dynamic model, which poses challenges especially for safety-critical control. 
        In this paper, a learning-based control policy is proposed to ensure the safety of systems with unknown disturbances through control barrier functions (CBFs).
        First, the disturbance is predicted by Gaussian process (GP) regression, whose prediction performance is guaranteed by a deterministic error bound.
        Then, a novel control strategy using GP-based parameterized high-order control barrier functions (GP-P-HOCBFs) is proposed via a shrunk original safe set based on the prediction error bound. 
        In comparison to existing methods that involve adding strict robust safety terms to the HOCBF condition, the proposed method offers more flexibility to deal with the conservatism and the feasibility of solving quadratic problems within the CBF framework.
        Finally, the effectiveness of the proposed method is demonstrated by simulations on Franka Emika manipulator.
	\end{abstract}
	
	\begin{keywords}
		Gaussian process, high order control barrier function, safety-critical control, disturbance
	\end{keywords}
	
	\section{INTRODUCTION}
	Safety is an essential requirement for real-world systems operating in complex environments, such as collision avoidance in multi-robot environments and secure interaction between human operators and robots \cite{yang2020neural}. 
    Various approaches have been developed to ensure the safety of dynamic systems, including artificial potential field methods \cite{singletary2021comparative} and model predictive control (MPC) \cite{mayne2000constrained}. 
    Moreover, when the formal definition of safety is proposed via the forward invariance of state sets for dynamical systems \cite{blanchini1999set}, control barrier functions (CBFs) \cite{prajna2004safety} become a popular tool for synthesizing safe controllers for dynamical systems. Due to its computational efficiency, CBF theory has gained widespread adoption in safety control applications \cite{xu2017realizing, lindemann2019control}.
    For more general safe constraints, i.e., with arbitrary relative degrees, high-order CBFs (HOCBF) are proposed \cite{xiao2021high} ensuring safety during the entire operation with available precise models of the dynamical systems.
    However, the availability of the system models is practically hard, especially with high model complexity and environmental disturbance. 

    For uncertain system models, various CBF-based methods have been proposed to guarantee safety. A common technique for systems with external disturbance to achieve real-time safety control is robust CBF (rCBF) \cite{nguyen2021robust,gurriet2018towards}. In this approach, the disturbance is bounded to an $\varepsilon$-ball in some norm form with a known boundary. A compensation term is then added to the CBF condition based on the bound of uncertainty to counteract the effects of uncertainty. However, obtaining the disturbance bound can be challenging. Therefore, different identification techniques, such as disturbance observer and extended state observer, have been employed to mitigate the conservatism of robust CBF by estimating the uncertainty \cite{agrawal2022safe, dacs2022robust}. Moreover, 
    data-driven methods are another technique for predicting unknown dynamics, where Gaussian processes (GP) are widely employed in safety-critical scenarios due to the modeling flexibility and the rigorous prediction performance quantification.
    The combination of GP and rCBF is widely studied \cite{emam2021data, wang2018safe}, where the prediction error is fully compensated during the design of rCBF-based quadratic programming (QP).
    While the GP-rCBF ensures the safety of unknown systems, the induced large compensation brings more conservatism and challenges to the feasibility of solving the derived QP problem.
    \looseness=-1
    	
    In contrast to rCBF, input-to-state safety CBF (ISSf-CBF) \cite{kolathaya2018input} partially compensates the prediction errors for better feasibility.
    To deal with the state-dependent conservatism brought by ISSf-CBF, a tunable ISSf-CBF is proposed in \cite{alan2021safe} to achieve the balance between safety and robustness based on the distance to the safe boundary.
    Furthermore, the idea of ISSf-CBF is extended to ISSf-HOCBF \cite{jiang2023safety} to deal with more general systems without the knowledge of disturbances or prediction errors.
    However, due to the lack of full compensation, ISSf-HOCBF results in a potential violation of the safety conditions, whose violation level depends on the effects of uncertainties.
    A promising way to compensate for the violation under ISSf-CBF is to design a new safe set characterized by parameterized barrier functions (PBF) \cite{alan2023parameterized}, such that the violation in the new set still ensures the safety.
    While the PBF framework includes the rCBF and ISSf-CBF, the consideration of the conservatism reduction and the extension to more general safety constraints with high relative degrees have not been addressed yet. 

    Inspired by PBF, HOCBF and tunable ISSf-CBF, this paper proposes a flexible framework, namely GP-based parameterized HOCBF (GP-P-HOCBF), to ensure the safe control of unknown systems under general safety constraints.
    The unknown dynamics is estimated by GP regression, where the prediction is employed in the QP with HOCBF to generate the safe control input.
    Due to the existence of prediction error, a new set of barrier functions is designed based on the theoretical GP prediction error bound, forming a smaller safe set than original.
    Moreover, for more flexibility in feasibility improvement and conservatism reduction, a ISSf-CBF like structure is applied on the newly designed barrier functions with a tunable function, such that the original safety constraint is guaranteed.
    It is shown that the choice of the larger tunable function enhances the feasibility of QP problem but induces a larger shrink from the original safe set.
    Moreover, for the same feasibility requirement, the proposed method provides flexibility to adjust the admissible safe set through distributing the shrink over different orders of CBFs. 
    Finally, the effectiveness of the proposed control strategy is demonstrated through simulations on a Franka Emika manipulator.
    \looseness=-1
	\section{PRELIMINARIES}
	In this section, some basic concepts of the control barrier function (CBF), and high-order CBF are recalled.
	\subsection{Control barrier function}
	Consider an affine control system
    \begin{equation} \label{dynamic-disturbance}
		\dot{\bm{x}} = \bm{f}(\bm{x}) + \bm{g}(\bm{x}) \bm{u} + \bm{d}( \bm{x} ),
	\end{equation}
	where $\bm{x} \in \mathbb{X} \subseteq \mathbb{R}^n$ and $\bm{u} \in \mathbb{U} \subseteq \mathbb{R}^s$ are the system states and control input, respectively.
	The drift function $\bm{f}(\cdot): \mathbb{X} \to \mathbb{R}^n$ and the input gain function $\bm{g}(\cdot): \mathbb{X} \to \mathbb{R}^{n \times s}$ are locally Lipschitz continuous functions. 
	The continuous function $\bm{d}(\cdot) = [d_1(\cdot), \dots, d_n(\cdot)]^T: \mathbb{X} \to \mathbb{R}^n$ denotes the model uncertainties and environmental disturbances. For any initial state $\bm{x}(t_0)$, $\bm{x}(t)$ is the unique solution to system \eqref{dynamic-disturbance}.
	
	For safety-critical control, a closed set $C$ is defined by a continuous differentiable function $h(\bm{x}): \mathbb{R}^n \to \mathbb{R}$ as
	\begin{align}\label{C}
		C=\{ \bm{x}\in \mathbb{R}^n: h(\bm{x})\geq 0 \},
	\end{align}
	It is assumed that $C$ is nonempty and has no isolated point. If for every $\bm{x}(t_0) \in C$, the state $\bm{x}(t)$ always stays in the set $C$ for $t\geq t_0$, the set $C$ is forward invariant \cite{blanchini1999set} and the safety of system \eqref{dynamic-disturbance} is guaranteed. The set $C$ is called the safe set.
    \looseness=-1
	
	To ensure the forward invariance of set $C$, the control barrier function (CBF) is proposed with some important definitions introduced below.
	
	\begin{definition}[Class $\mathcal{K}$ Function \cite{hassan2002nonlinear}]
		A continuous function $\alpha: \left[ 0, a\right) \to \left[ 0, \infty \right), a>0$, is a class $\mathcal{K}$ function if it is strictly increasing and $\alpha(0)=0$.
	\end{definition}
	
	\begin{definition}[Relative Degree \cite{hassan2002nonlinear}]
		For a continuous differentiable function $h(\bm{x}): \mathbb{R}^n \to \mathbb{R}$ with respect to system \eqref{dynamic-disturbance}, the relative degree is the number of times it needs to be differentiated along with its dynamics until the control input $\bm{u}$ explicitly shows in the corresponding derivative.
	\end{definition}
	
	When the relative degree of function $h(\bm{x})$ is $m \!\in\! \mathbb{N}_{>0}$, and the inequality $h(\bm{x}) \!\geq\! 0$ is used as a safety constraint, the definition of CBF is given for $m \!=\! 1$ as follows.
	\begin{definition}[Control Barrier Function]
		Given a set $C$ as in (\ref{C}), $h(\bm{x})$ is a control barrier function (CBF) for system (\ref{dynamic-disturbance}) if there exists a class $\mathcal{K}$ function $\alpha(\cdot)$ such that
		\begin{equation} \label{cbf}
			\sup_{\bm{u} \in \mathbb{U}} \Big\{\! \frac{\partial h(\bm{x})}{\partial \bm{x}} (\bm{f}(\bm{x}) \!+\! \bm{g}(\bm{x})\bm{u} + \bm{d}( \bm{x})) \!+\! \alpha(h(\bm{x})) \!\Big\} \!\geq\! 0,
		\end{equation}  
	\end{definition}
	
	The forward invariance of $C$ is guaranteed as follows.
	\begin{lemma}[\cite{jankovic2018robust}]
		\label{lemma: cbf}
		Given the set $C$ defined by \eqref{C} for a continuous differentiable function $h(\bm{x})$, if $h(\bm{x})$ is a CBF, then Lipschitz continuous control input $\bm{u}(t) \in K_{cbf}(\bm{x}) =  \{ \bm{u} \in \mathbb{U}: \frac{\partial h(\bm{x})}{\partial \bm{x}} (\bm{f}(\bm{x})+\bm{g}(\bm{x})\bm{u} + \bm{d}( \bm{x})) + \alpha(h(\bm{x})) \geq 0 \}$ renders the set $C$ forward invariant.
	\end{lemma}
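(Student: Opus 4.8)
The plan is to reduce the set‑invariance claim to a one–dimensional differential inequality along closed‑loop trajectories and then close it with a comparison/Nagumo argument. First I would fix any Lipschitz continuous feedback $\bm{u}(t) \in K_{cbf}(\bm{x}(t))$ and substitute it into \eqref{dynamic-disturbance}; the resulting closed‑loop vector field is continuous in $(t,\bm{x})$, so the standing assumption stated below \eqref{dynamic-disturbance} (existence and uniqueness of solutions) carries over and the trajectory $\bm{x}(t)$ through any $\bm{x}(t_0)\in C$ is well defined and $C^1$ in $t$ on its maximal interval of existence.

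Next, define the scalar signal $\eta(t):=h(\bm{x}(t))$, which is continuously differentiable. Whenever $\bm{x}(t)\in C$, i.e. $\eta(t)\ge 0$, the membership $\bm{u}(t)\in K_{cbf}(\bm{x}(t))$ gives, by the chain rule,
\begin{equation*}
  \dot{\eta}(t) = \frac{\partial h(\bm{x}(t))}{\partial \bm{x}}\bigl(\bm{f}(\bm{x}(t)) + \bm{g}(\bm{x}(t))\bm{u}(t) + \bm{d}(\bm{x}(t))\bigr) \ \ge\ -\alpha(\eta(t)).
\end{equation*}
Since $\alpha$ is class $\mathcal{K}$ with $\alpha(0)=0$, the scalar comparison system $\dot{y}=-\alpha(y)$ has $y\equiv 0$ as an equilibrium, so its solution from $y(t_0)=\eta(t_0)\ge 0$ stays nonnegative; the comparison lemma then yields $\eta(t)\ge y(t)\ge 0$, i.e. $\bm{x}(t)\in C$, for all $t\ge t_0$. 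To make this rigorous without presupposing that $\bm{x}(t)$ remains in $C$ (and to respect that $\alpha$ is only defined on $[0,a)$), I would phrase it as a first‑exit‑time contradiction: set $t^{\star}=\inf\{t>t_0:\eta(t)<0\}$, so by continuity $\eta(t^{\star})=0$ and $\eta\ge 0$ on $[t_0,t^{\star}]$, hence the differential inequality holds on that whole interval and forces $\eta$ to stay nonnegative past $t^{\star}$, contradicting the definition of $t^{\star}$. Equivalently, one may invoke Nagumo's theorem directly: on $\partial C=\{h=0\}$ the inequality reads $\dot{\eta}\ge -\alpha(0)=0$, so the closed‑loop field is subtangent to the closed set $C$, which together with uniqueness of solutions gives forward invariance; finiteness of the escape time is irrelevant to the invariance conclusion.

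The step I expect to need the most care is the comparison argument itself: a class $\mathcal{K}$ function need not be locally Lipschitz, so $\dot{y}=-\alpha(y)$ may admit nonunique solutions and the comparison lemma must be applied to its maximal solution (or, as is standard, one restricts to locally Lipschitz extended class $\mathcal{K}$ functions). The genuinely delicate point is the boundary behaviour — showing that a trajectory grazing $\{h=0\}$ with $\dot{\eta}=0$ there cannot cross — and this is exactly what the equilibrium property $\alpha(0)=0$ supplies and what makes the first‑exit‑time contradiction close.
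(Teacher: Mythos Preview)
The paper does not supply its own proof of this lemma; it is stated as a cited preliminary result from \cite{jankovic2018robust}, so there is nothing to compare against at the level of argument structure. Your proposal is correct and is precisely the standard route used in the CBF literature: reduce to the scalar differential inequality $\dot{\eta}\ge -\alpha(\eta)$ along trajectories and close with a comparison/Nagumo first-exit-time argument, with the care you note about non-Lipschitz class $\mathcal{K}$ functions and maximal solutions.
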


    For general cases with arbitrary relative degree $m \in \mathbb{N}_{>0}$, similar definitions and conclusions are shown in the following subsection.
    
	\subsection{High order control barrier function}
    \label{subsection_HOCBF}
	When the relative degree of $h(\bm{x})$ satisfies $m>1$, the term $\frac{\partial h(\bm{x})}{\partial \bm{x}}\bm{g}(\bm{x}) \!=\! 0$. In this case, the CBF can not be used to guarantee the safety of system \eqref{dynamic-disturbance}. Therefore, the high order control barrier function (HOCBF) is proposed, where firstly a sequence of functions $\psi_i(\bm{x})\!: \mathbb{R}^n \!\to\! \mathbb{R}$, $i \!\in\! \{ 0,\dots,m\}$ are defined as
	\begin{equation}
		\begin{aligned}
			\label{psi}
			&\psi_0(\bm{x})=h(\bm{x}), \\
			&\psi_i(\bm{x})=\dot{\psi}_{i-1}(\bm{x})+\alpha_i(\psi_{i-1}(\bm{x})),
		\end{aligned}
	\end{equation}
	in which $\alpha_i(\cdot)$ denotes $(m-i)^{th}$ order differentiable class $\mathcal{K}$ function.
    The corresponding safe sets are defined as
    \begin{align}
        C_i=\{\bm{x}\in \mathbb{X}: \psi_{i-1}(\bm{x}) \geq 0\}
    \end{align}
	for $i\in \{1,\dots,m\}$.	
	Given the functions $\psi_i(\bm{x}): \mathbb{R}^n \to \mathbb{R}$, $i\in \{0,\dots,m\}$, the definition of high order control barrier function (HOCBF) is shown as below.
	\begin{definition}[HOCBF \cite{xiao2021high}]
		\label{define-hocbf}
		A function $h(\bm{x}): \mathbb{X} \to \mathbb{R}$ is a HOCBF of relative degree $m$ for system (\ref{dynamic-disturbance}), if there exist $(m-i)^{th}$ order differentiable class $\mathcal{K}$ functions $\alpha_i, i\in \{1,\dots,m-1\}$, and a class $\mathcal{K}$ function $\alpha_m$ such that		
		\begin{align}
			\label{hocbf}
			\sup\nolimits_{\bm{u} \in \mathbb{U}} \{\dot{\psi}_{m-1}(\bm{x})+\alpha_{m} (\psi_{m-1}(\bm{x})) \} \geq 0,
		\end{align}
		for all $\bm{x} \in C_1\cap\dots\cap C_m$.
	\end{definition}
	
	Similar to \cref{lemma: cbf}, the following result also guarantees the forward invariance of set $C_1\cap\dots\cap C_m$.
	\begin{lemma}[\cite{xiao2021high}]
		\label{lemma: hocbf}
		The set $C_1\cap\dots\cap C_m$ is forward invariant for system (\ref{dynamic-disturbance}) if $\bm{x}(0) \in C_1\cap\dots\cap C_m$ and $h(\bm{x})$ is a HOCBF.
	\end{lemma}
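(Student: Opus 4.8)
The plan is to prove \cref{lemma: hocbf} by showing that along any closed-loop solution the entire hierarchy $\psi_0,\dots,\psi_{m-1}$ stays nonnegative, chaining the recursion \eqref{psi} with the comparison lemma, and then ruling out escape from the set by a Nagumo-type boundary argument. First I would use the HOCBF property \eqref{hocbf}: for every $\bm{x}\in C_1\cap\dots\cap C_m$ there is a $\bm{u}\in\mathbb{U}$ with $\dot{\psi}_{m-1}(\bm{x})+\alpha_m(\psi_{m-1}(\bm{x}))\geq 0$, so one may pick a locally Lipschitz feedback $\bm{u}(t)$ realizing this inequality. Since $\bm{f},\bm{g},\bm{d}$ are locally Lipschitz this gives a unique solution $\bm{x}(t)$ on a maximal interval $[0,T)$, and since $\alpha_i$ is $(m-i)$-times differentiable each map $t\mapsto\psi_i(\bm{x}(t))$ is differentiable there. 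With $\psi_m(\bm{x}):=\dot{\psi}_{m-1}(\bm{x})+\alpha_m(\psi_{m-1}(\bm{x}))$, the chosen input makes $\psi_m(\bm{x}(t))\geq 0$ whenever $\bm{x}(t)\in C_1\cap\dots\cap C_m$.

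The core is a one-step implication read directly off \eqref{psi}: if $\psi_i(\bm{x}(t))\geq 0$ on an interval, then $\dot{\psi}_{i-1}(\bm{x}(t))=\psi_i(\bm{x}(t))-\alpha_i(\psi_{i-1}(\bm{x}(t)))\geq-\alpha_i(\psi_{i-1}(\bm{x}(t)))$ there; since $\bm{x}(0)\in C_i$ gives $\psi_{i-1}(\bm{x}(0))\geq 0$ and the scalar system $\dot{y}=-\alpha_i(y)$, $y(0)=\psi_{i-1}(\bm{x}(0))\geq 0$ admits a nonnegative solution (because $\alpha_i(0)=0$, so $y\equiv 0$ is a solution and a trajectory starting at $y(0)\geq 0$ cannot become negative), the comparison lemma \cite{hassan2002nonlinear} yields $\psi_{i-1}(\bm{x}(t))\geq y(t)\geq 0$. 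Starting from $\psi_m(\bm{x}(t))\geq 0$ and applying this for $i=m,m-1,\dots,1$ in turn gives $\psi_{i-1}(\bm{x}(t))\geq 0$ for every $i$, i.e. $\bm{x}(t)\in C_1\cap\dots\cap C_m$; in particular $h(\bm{x}(t))=\psi_0(\bm{x}(t))\geq 0$, which is the claimed forward invariance.

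The step I expect to need the most care is the circularity in the above: $\psi_m\geq 0$ is only assumed on $C_1\cap\dots\cap C_m$, so the comparison chain is a priori valid only while $\bm{x}(t)$ stays in that set. I would close this on the maximal interval $[0,\tau)$, with $\tau=\sup\{t\in[0,T): \bm{x}(s)\in C_1\cap\dots\cap C_m \ \forall s\in[0,t]\}$: the chain applies on $[0,\tau)$, continuity of the $\psi_i$ then forces $\psi_{i-1}(\bm{x}(\tau))\geq 0$ for all $i$, so $\bm{x}(\tau)\in C_1\cap\dots\cap C_m$, and at any boundary point of the intersection — where $\psi_{k-1}=0$ for some $k$ — one has $\dot{\psi}_{k-1}\geq-\alpha_k(\psi_{k-1})=0$, because $\dot{\psi}_{k-1}=\psi_k$ with $\psi_k\geq 0$ meaning membership in $C_{k+1}$ for $k<m$, and because of \eqref{hocbf} for $k=m$. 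Thus the closed-loop vector field satisfies Nagumo's subtangential condition on $\partial(C_1\cap\dots\cap C_m)$, so the trajectory cannot leave and $\tau=T$. The remaining work — the algebra of \eqref{psi} and the comparison estimate — is routine; the genuine subtlety is the regularity of $C_1\cap\dots\cap C_m$ needed to legitimately invoke Nagumo's theorem for a set cut out by several inequalities at once.
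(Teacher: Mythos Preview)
The paper does not actually prove \cref{lemma: hocbf}: it is quoted as a preliminary result from \cite{xiao2021high} with no proof given, so there is nothing in the paper to compare against directly. That said, your cascading comparison-lemma argument is exactly the standard proof of this result in \cite{xiao2021high}, and it is also the template the paper itself follows when proving \cref{theorem}: derive $\dot{\psi}_{m-1}\geq -\alpha_m(\psi_{m-1})$, conclude $\psi_{m-1}\geq 0$ by comparison, then feed this into $\dot{\psi}_{m-2}=\psi_{m-1}-\alpha_{m-1}(\psi_{m-2})\geq -\alpha_{m-1}(\psi_{m-2})$, and iterate down to $\psi_0=h$. So your approach is correct and aligned with both the cited source and the paper's own proof style.

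Two minor remarks. First, the sentence ``one may pick a locally Lipschitz feedback $\bm{u}(t)$ realizing this inequality'' hides a measurable-selection issue that neither the paper nor \cite{xiao2021high} dwell on; in this literature the existence of such a feedback is typically taken as part of the HOCBF hypothesis (cf.\ the explicit $\bm{u}\in K_{cbf}$ in \cref{lemma: cbf}), so you should state it as an assumption rather than a consequence. Second, your Nagumo paragraph is more careful than what \cite{xiao2021high} actually does: there the circularity is handled implicitly by the same maximal-interval/comparison argument you sketch, without invoking Nagumo by name, and the regularity of $C_1\cap\dots\cap C_m$ is not discussed. Your version is sound, just more explicit than the source.
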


    With the above definitions and results, we introduce the problem setting of this paper in \cref{section_problem_setting}.
    
	\section{PROBLEM STATEMENT}\label{section_problem_setting}
	The concepts of CBF and HOCBF both are based on the precise dynamic model (\ref{dynamic-disturbance}), but the external disturbance usually is difficult to get. In this section, some assumptions on the disturbance are given, and the goal of this paper is formulated.   
	
	For the system \eqref{dynamic-disturbance}, the drift function $\bm{f}(\bm{x})$ and the input gain function $\bm{g}(\bm{x})$ are assumed to be known, but the disturbance $\bm{d}(\bm{x})$ is unknown and it satisfies the following assumption.
	
	\begin{assump} \label{assump1}
		Given a kernel function $k_i(\cdot,\cdot): \mathbb{X} \times \mathbb{X} \to \mathbb{R}_{0,+}$ for $\forall i = 1, \dots, n$, the unknown function $d_i(\cdot)$ belongs to the reproducing kernel Hilbert space (RKHS) $\mathcal{H}_{k_i}$ corresponding to $k_i$ with the bounded RKHS norm by a well-defined known constant $B_i \in \mathbb{R}_{0,+}$, i.e., $\| d_i \|_{k_i} \leq B_i$.
	\end{assump}
	\cref{assump1} defines a potential function space of the unknown function $d_i(\cdot)$, which is only related to $k_i(\bm{x}, \cdot), \forall \bm{x} \in \mathbb{X}$.
	The kernel function reflects the correlation between $d_i(\cdot)$ at two points, which requires less prior information than the structure of parametric models e.g., neural networks.
	Choosing universal kernels, e.g., square exponential kernel, the RKHS $\mathcal{H}_{k_i}$ includes all continuous functions, providing the property of universal approximation \cite{micchelli2006universal}.
	The RKHS norm $\| d_i \|_{k_i}$ represents the smoothness of $d_i$, such that the existence of the upper bounds $B_i$ is regarded as the requirement of Lipschitz continuity.
	The value of $B_i$ can be approximated with data-driven method \cite{scharnhorst2022robust}, and therefore \cref{assump1} imposes no practical restrictions.
	
	To obtain $d_i$ from the function space $\mathcal{H}_{k_i}$, the data set $\mathcal{D} = \{ \bm{x}^{(\iota)}, \bm{y}^{(\iota)} \}_{\iota = 1, \dots, M}$ with $M \in \mathbb{N}_{>0}$ is required satisfying the following assumption.
	\begin{assump} \label{assump2}
		The data pair $\{ \bm{x}, \bm{y} \}$ is available with $\bm{y} = [y_1, \dots, y_n]^T = \bm{d}(\bm{x}) + \bm{v}$, where $\bm{v}$ denotes the measurement noise satisfying $\| \bm{v} \|_\infty \leq \sigma_v$ and $\sigma_v \in \mathbb{R}_{0,+}$.
	\end{assump}
	
	\cref{assump2} requires the accessibility of the system state $\bm{x}$, which is commonly found in most advanced control laws, including back-stepping and CBF-based controller.
	Moreover, it also allows noisy measurement of $\bm{d}(\cdot)$, which can be calculated by $\bm{y} \!=\! \dot{\bm{x}} \!-\! \bm{f}(\bm{x}) \!-\! \bm{g}(\bm{x}) \bm{u}$ with $\dot{\bm{x}}$ approximated by numerical methods from $\bm{x}$, e.g., finite difference.
	The upper bound $\sigma_v$ can be easily obtained in practice, inducing the noise variance smaller than $\sigma_v^2$ and making \cref{assump2} not restrictive.
	
	The control objective is to maintain the system state $\bm{x}$ in a safe set $C$ as \eqref{C} defined by a safety constraint function $h(\cdot): \mathbb{X} \!\to\! \mathbb{R}$, which is at least $m$-th order differentiable with relative degree $m$.
    To ensure safety with model uncertainties, a learning-based control policy with HOCBF is proposed, where machine learning methods are applied to predict the uncertainty $\bm{d}(\cdot)$.
    Moreover, a smaller safety set $C^* \!\subseteq\! C$ with new function $h^*(\cdot): \mathbb{X} \!\to\! \mathbb{R}$ is determined to address the effect of prediction error from machine learning.
	
	\section{MAIN RESULT}
	In this section, a GP regression approach to learn the disturbance $d_i$ with a deterministic error bound is provided. Based on the error bound, a new learning-based HOCBF is formulated to guarantee the safety of system (\ref{dynamic-disturbance}). 
	
	\subsection{Gaussian Process}
    Gaussian process, as a kernel method, induces a Gaussian distribution over functions defined by the kernel function $k_i(\cdot,\cdot)$ as $d_i \sim \mathcal{GP}(0, k_i)$ and $d_i \in \mathcal{H}_{k_i}$.
	By combining the data set $\mathcal{D}$, which satisfies $|\mathcal{D}| = M$, with \cref{assump2} and applying the Bayesian principle, the prediction of $\bm{d}(\bm{x})$ at $\bm{x} \in \mathbb{X}$ as a Gaussian distribution with posterior mean $\bm{\mu}(\bm{x}) [\mu_1(\bm{x}), \dots, \mu_n(\bm{x})]^T$ and variance $\bm{\Sigma}(\bm{x}) = \mathrm{diag}(\sigma_1^2(\bm{x}), \dots, \sigma_n^2(\bm{x}))$ as
	\begin{align}
		&\mu_i(\bm{x}) = \bm{k}^{T}_{\mathcal{D},i}(\bm{x}) ( \bm{K}_{\mathcal{D},i} + \sigma_v^2 \bm{I}_{M})^{-1} \bm{y}_{\mathcal{D},i}, \\
		&\sigma_i^2(\bm{x}) = k_i(\bm{x},\bm{x}) - \bm{k}^{T}_{\mathcal{D},i}(\bm{x}) ( \bm{K}_{\mathcal{D},i} + \sigma_v^2 \bm{I}_{M})^{-1} \bm{k}_{\mathcal{D},i}(\bm{x}), \nonumber
	\end{align}
	where $\bm{k}_{\mathcal{D},i}(\bm{x}) = [k_i(\bm{x}, \bm{x}^{(1)}), \dots, k_i(\bm{x}, \bm{x}^{(M)})]^T$, $\bm{K}_{\mathcal{D},i} = [k_i(\bm{x}^{(i)}, \bm{x}^{(j)})]_{i,j  = 1, \dots, M}$ and $\bm{y}_{\mathcal{D},i} = [y^{(1)}_i, \dots, y^{(M)}_i]^T$.
	While the posterior mean $\bm{\mu}(\cdot)$ serves as the compensation of $\bm{d}(\cdot)$, the posterior variance $\bm{\Sigma}(\cdot)$ is used to quantify the prediction performance, which is shown as follows.
    \begin{lemma}[\cite{hashimoto2022learning}]
		\label{lemma: error-bound}
        Consider an unknown function $\bm{d}(\cdot)$ satisfying \cref{assump1}, which is predicted by GP regression using the training data set $\mathcal{D}$ with $| \mathcal{D} | = M$ and \cref{assump2}.
        Then, the prediction error is uniformly bounded as
        \begin{align} \label{eqn_GP_bound}
            \| \bm{\mu}(\bm{x}) - \bm{d}(\bm{x}) \| \le \eta(\bm{x}) = \sqrt{ \mathrm{tr} (\bm{\mathcal{B}} \bm{\Sigma}(\bm{x}) )},
        \end{align}
        for $\bm{x} \in \mathbb{X}$, where $\bm{\mathcal{B}} = \mathrm{diag}(\beta_1, \dots, \beta_n)$ with
        \begin{align}
            \beta_i = B_i^2 - \bm{y}_{\mathcal{D},i}^T (\bm{K}_{\mathcal{D},i} + \sigma_v^2 \bm{I}_M)^{-1} \bm{y}_{\mathcal{D},i} + M
        \end{align}
        and $B_i$ from \cref{assump1}.
	\end{lemma}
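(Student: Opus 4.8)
\emph{Proof plan.} The plan is to reduce the vector estimate to $n$ independent scalar estimates and then prove each scalar estimate by an exact identity in the reproducing kernel Hilbert space. Since $\bm{\Sigma}(\bm{x})$ and $\bm{\mathcal{B}}$ are diagonal, $\mathrm{tr}(\bm{\mathcal{B}}\bm{\Sigma}(\bm{x})) = \sum_{i=1}^n \beta_i \sigma_i^2(\bm{x})$ while $\|\bm{\mu}(\bm{x}) - \bm{d}(\bm{x})\|^2 = \sum_{i=1}^n (\mu_i(\bm{x}) - d_i(\bm{x}))^2$, so it suffices to show the componentwise bound $(\mu_i(\bm{x}) - d_i(\bm{x}))^2 \le \beta_i \sigma_i^2(\bm{x})$; summation and a square root then give \eqref{eqn_GP_bound}. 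Fix $i$, and write $f = d_i \in \mathcal{H}_{k}$ with $\|f\|_{k} \le B$, $\mu = \mu_i$, $\sigma = \sigma_i$, $\bm{K} = \bm{K}_{\mathcal{D},i}$, $\bm{y} = \bm{y}_{\mathcal{D},i}$, and $\bm{v}$ the corresponding noise vector, so that $\bm{y} = [f(\bm{x}^{(1)}), \dots, f(\bm{x}^{(M)})]^T + \bm{v}$ with $\|\bm{v}\|_\infty \le \sigma_v$ by \cref{assump2}.

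Next I would pass to the canonical feature map $\phi(\bm{x}) = k(\bm{x},\cdot) \in \mathcal{H}_{k}$ and the sampling operator $\bm{\Phi}:\mathbb{R}^M \to \mathcal{H}_{k}$ with adjoint $\bm{\Phi}^T: g \mapsto [g(\bm{x}^{(1)}), \dots, g(\bm{x}^{(M)})]^T$, so $\bm{\Phi}^T\bm{\Phi} = \bm{K}$, $\bm{\Phi}^T f = \bm{y} - \bm{v}$, and $\mu(\bm{x}) = \langle \phi(\bm{x}), w^*\rangle_{k}$ with $w^* = \bm{\Phi}(\bm{K} + \sigma_v^2 \bm{I}_M)^{-1}\bm{y}$. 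The push-through identity rewrites $w^* = A^{-1}\bm{\Phi}\bm{y}$ where $A = \bm{\Phi}\bm{\Phi}^T + \sigma_v^2\bm{I}$, and the resolvent identity $\sigma_v^2 A^{-1} = \bm{I} - \bm{\Phi}(\bm{K}+\sigma_v^2\bm{I}_M)^{-1}\bm{\Phi}^T$ yields $\sigma^2(\bm{x}) = \sigma_v^2\langle \phi(\bm{x}), A^{-1}\phi(\bm{x})\rangle_{k} = \sigma_v^2 \|A^{-1/2}\phi(\bm{x})\|_{k}^2$. Writing $f(\bm{x}) - \mu(\bm{x}) = \langle A^{-1/2}\phi(\bm{x}), A^{1/2}(f - w^*)\rangle_{k}$ and applying Cauchy--Schwarz gives $(f(\bm{x}) - \mu(\bm{x}))^2 \le \sigma_v^{-2}\sigma^2(\bm{x})\,\|A^{1/2}(f - w^*)\|_{k}^2$.

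The core of the argument is to evaluate $\|A^{1/2}(f - w^*)\|_{k}^2$ exactly. From $A w^* = \bm{\Phi}\bm{y} = \bm{\Phi}\bm{\Phi}^T f + \bm{\Phi}\bm{v}$ one gets $A(f - w^*) = \sigma_v^2 f - \bm{\Phi}\bm{v}$, hence $\|A^{1/2}(f - w^*)\|_{k}^2 = \langle A^{-1}(\sigma_v^2 f - \bm{\Phi}\bm{v}), \sigma_v^2 f - \bm{\Phi}\bm{v}\rangle_{k}$; expanding this with the resolvent identity and using $(\bm{K}+\sigma_v^2\bm{I}_M)^{-1}(\bm{K}+\sigma_v^2\bm{I}_M) = \bm{I}_M$, every cross term collapses and one arrives at $\|A^{1/2}(f - w^*)\|_{k}^2 = \sigma_v^2\big(\|f\|_{k}^2 - \bm{y}^T(\bm{K}+\sigma_v^2\bm{I}_M)^{-1}\bm{y} + \sigma_v^{-2}\|\bm{v}\|_2^2\big)$. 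Combining with the Cauchy--Schwarz bound gives $(f(\bm{x}) - \mu(\bm{x}))^2 \le \sigma^2(\bm{x})\big(\|f\|_{k}^2 - \bm{y}^T(\bm{K}+\sigma_v^2\bm{I}_M)^{-1}\bm{y} + \sigma_v^{-2}\|\bm{v}\|_2^2\big)$, and \cref{assump1,assump2} then supply $\|f\|_{k}^2 \le B^2$ and $\|\bm{v}\|_2^2 \le M\sigma_v^2$, i.e. $\sigma_v^{-2}\|\bm{v}\|_2^2 \le M$; this is exactly $(\mu_i(\bm{x}) - d_i(\bm{x}))^2 \le \beta_i\,\sigma_i^2(\bm{x})$. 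Summing over $i$ and taking square roots proves \eqref{eqn_GP_bound}.

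\textbf{Main obstacle.} The delicate part is the operator bookkeeping in the (possibly infinite-dimensional) RKHS: justifying the push-through and resolvent identities for $\bm{\Phi}$, $\bm{\Phi}^T$ and $A$, and verifying that the cross terms in $\langle A^{-1}(\sigma_v^2 f - \bm{\Phi}\bm{v}),\sigma_v^2 f - \bm{\Phi}\bm{v}\rangle_{k}$ cancel so precisely that only the data-dependent quantity $B^2 - \bm{y}^T(\bm{K}+\sigma_v^2\bm{I}_M)^{-1}\bm{y}$ plus the noise budget $M$ survive. Once that identity is in hand, the remainder is a single Cauchy--Schwarz step followed by substitution of the two assumptions.
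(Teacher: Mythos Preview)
The paper does not actually prove \cref{lemma: error-bound}; it is quoted verbatim from \cite{hashimoto2022learning} and used as a black box, so there is no in-paper argument to compare against. Your proposal supplies a complete and correct proof along the standard RKHS route used in that reference and in the kernel-bandit literature: the componentwise reduction is immediate from the diagonal structure of $\bm{\mathcal{B}}$ and $\bm{\Sigma}(\bm{x})$; the Cauchy--Schwarz factorisation $f(\bm{x})-\mu(\bm{x})=\langle A^{-1/2}\phi(\bm{x}),A^{1/2}(f-w^*)\rangle_k$ together with $\sigma^2(\bm{x})=\sigma_v^2\|A^{-1/2}\phi(\bm{x})\|_k^2$ is the right decomposition; and the key identity $\|A^{1/2}(f-w^*)\|_k^2=\sigma_v^2\big(\|f\|_k^2-\bm{y}^T(\bm{K}+\sigma_v^2\bm{I}_M)^{-1}\bm{y}\big)+\|\bm{v}\|_2^2$ does hold---a direct computation using $\bm{\Phi}^T z=\sigma_v^2\bm{y}-(\bm{K}+\sigma_v^2\bm{I}_M)\bm{v}$ shows the cross terms cancel exactly as you describe. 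The only remark is that the operator manipulations you flag as the ``main obstacle'' are in fact routine here, since $\bm{\Phi}$ has finite rank $M$ and $A$ is a bounded perturbation of $\sigma_v^2\bm{I}$, so $A^{-1}$, $A^{1/2}$ and the push-through identity are all elementary; there is no genuine analytic difficulty left.
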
 

    Compared to the probabilistic error bounded provided in \cite{berkenkamp2016safe}, \cref{lemma: error-bound} shows a deterministic bound, which is more suitable in safety-critical scenarios despite more conservatism, such that the violation of the prediction bound is zero.
    Note that the decreasing trend of $\eta(\cdot)$ for increasing number of training data samples $M$ is proven in \cite{hashimoto2022learning}, making the decrease of the conservatism of control performance possible. However, the induced larger computational time due to the complexity $\mathcal{O}(N)$ and $\mathcal{O}(N^2)$ of $\bm{\mu}(\cdot)$ and $\bm{\Sigma}(\cdot)$ can deteriorate the prediction and thus control performance.
    Therefore, the selection of training data number should strike a balance between the performance conservatism and computation efficiency according to the actual situation.
    
    According to \cref{lemma: error-bound} and given data set $\mathcal{D}$, it is direct to derive the state independent prediction error bound as
    \begin{align}\label{eq: error-bound}
        \| \bm{\mu}(\bm{x}) - \bm{d}(\bm{x}) \| \le \bar{\eta}_{\mathcal{D}} \le \bar{\eta},
    \end{align}
    where $\bar{\eta}_{\mathcal{D}} \!=\! \max_{\bm{x} \in \mathbb{X}} \eta(\bm{x})$ is related to the given data set $\mathcal{D}$.
    The data set independent bound $\bar{\eta} \!=\! \sqrt{ \sum_{i=1}^n (B_i^2 \!+\! M) \max_{\bm{x} \in \mathbb{X}} k_i(\bm{x}, \bm{x}) }$ only requires the size of the data set $M$ and the kernels $k_i(\cdot,\cdot)$ for $i \!=\! 1,\!\dots\!,n$, which is derived due to the positive definite $\bm{K}_{\mathcal{D},i} \!+\! \sigma_v^2 \bm{I}_M$ for any $\mathcal{D}$.
    \looseness=-1

    Based on the bounded prediction error $\bar{\eta}$, a new safety constraint, i.e., new barrier function, is constructed, resulting in a new safe set.
    Next, a method is introduced for newly designed barrier function, such that the original safe set maintains forward invariant and the original safety is preserved.
    \looseness=-1
	
	\subsection{GP-based parameterized HOCBF}
	
	To avoid unsafe behavior caused by inaccurate uncertainty prediction, the HOCBF-based safety guarantees is extended to GP-based parameterized HOCBF by introducing a parameterized function $h^*(\cdot)$ related with prediction error bound $\bar{\eta}$ from $h(\cdot)$.
    Similarly as the HOCBF in \cref{subsection_HOCBF}, a sequence of functions $\psi^*_i(\bm{x}): \mathbb{X} \to \mathbb{R}$, $i \in \{0,\dots,m-1\}$ are defined based on original functions $\psi_i(\bm{x}): \mathbb{X} \to \mathbb{R}$, $i\in \{0,\dots,m-1\}$ for HOCBF as
	\begin{align}
		\label{eq: psi_0}
		\psi^*_i(\bm{x})
        &\!=\! \psi_i(\bm{x}) \!-\! \gamma_{i+1}(\psi_i(\bm{x}), \bar{\eta}^2), 
	\end{align}
	where $\gamma_i(\psi_{i-1}, \bar{\eta}^2): \mathbb{R} \times \mathbb{R}_{0, +} \to \mathbb{R}_{0, +}$ is a continuous differentiable function w.r.t $\psi_{i-1}$, In addition, there exist $(m-i)^{th}$ order differentiable extended class $\mathcal{K}$ function $\alpha^*_i(\cdot)$ such that the function $\psi^*_i(\bm{x})$ satisfies
    \begin{align} \label{eqn_psi_star_derivative_definition}
        \psi^*_i(\bm{x})& \!\le\! \dot{\psi}^*_{i-1}(\bm{x}) \!+\! \alpha^*_i(\psi^*_{i-1}(\bm{x})) + \Delta_{i}(\bm{x}),
    \end{align}
    where $\Delta_{i}(\bm{x}) \!=\! \max\{ 0, \gamma_{i\!+\!1}(\psi_{i}(\bm{x}), \bar{\eta}^2) \!-\! \gamma_{i+1}(\psi_{i\!-\!1}(\bm{x}), \bar{\eta}^2) \}$. 
    With the introduced functions $\psi^*_i$, the corresponding new safe sets $C^*_i$ are defined as
	\begin{align}
		\label{c_delta}
		C^*_i=\{ \bm{x} \in \mathbb{X}: \psi^*_{i-1}(\bm{x}) \geq 0 \}, i\in \{1,\dots,m\},
	\end{align}
    The key idea is to establish ISSf for sets $C^*_{i}$ with $i\in \{1,\dots,m\}$, such that the forward invariance of original safety set $C_i$ is ensured.
    Given the functions $\psi^*_i(\bm{x})$ for $i\in \{0,\!\dots\!,m\!-\!1\}$, the state-dependent admission set $K_s(\bm{x})$ of input $\bm{u}$ is defined as
	\begin{align} \label{eqn_Ks}
		K_s(\bm{x})=\Big\{ &\bm{u} \in \mathbb{U}: \frac{\partial\psi^*_{m-1}}{\partial \bm{x}} \big(\bm{f}(\bm{x})+\bm{g}(\bm{x})\bm{u}+\bm{\mu}(\bm{x}) \big)\\
		&\geq\! \frac{1}{\epsilon(\psi_{m-1}(\bm{x}))} \Big\|\frac{\partial\psi^*_{m-1}}{\partial \bm{x}} \Big\|^2 \!-\! \alpha^*_m(\psi^*_{m-1}(\bm{x}))\Big\}, \nonumber
	\end{align}
    where $\alpha^*_m: \mathbb{R} \to \mathbb{R}$ belongs to class $\mathcal{K}$, $\epsilon(\cdot): \mathbb{R} \!\to\! \mathbb{R}_{0,+}$ is a designed continuously differentiable function satisfying
    \begin{align} \label{eq: d_epsilon}
		\frac{\mathrm{d}\epsilon(\psi_{m-1}(\bm{x}))}{\mathrm{d}\psi_{m-1}} \leq 0, && \forall \psi_{m-1}(\bm{x}) \in \mathbb{R}.
	\end{align}
    Based on the admissible input set $K_s(\cdot)$, the definition of GP-based parameterized HOCBF is given as follows.
    \begin{definition} [GP-P-HOCBF] \label{definition_GPPHOCBF} 
        A continuous differentiable function $h: \mathbb{X} \to \mathbb{R}$ is a GP-based parameterized HOCBF (GP-P-HOCBF) of relative degree $m$ for system \eqref{dynamic-disturbance}, if there exist functions $\gamma_i$ and $\alpha^*_i$ for $i \in \{1,\dots,m\}$ such that $K_s(\bm{x})$ is non-empty for $\forall \bm{x} \in \mathbb{X}$.
    \end{definition}
    \cref{definition_GPPHOCBF} requires to find the suitable functions $\gamma_i$ and $\alpha^*_i$ for $i \in \{1,\dots,m\}$, which is shown later for a given $\epsilon(\cdot)$.
    Moreover, the relationship between $\psi^*_i(\bm{x})$ and original functions $\psi(\bm{x})_i$ for $i={0, \dots, m-1}$ is shown in the following theorem. For notational simplicity, the following $\psi^*_i(\bm{x})$ and $\psi_i(\bm{x})$ is denoted as $\psi^*_i$ and $\psi_i$ respectively.
    
	\begin{thm}
    \label{theorem}
        Given the system \eqref{dynamic-disturbance} satisfying \cref{assump1} and \ref{assump2} and a GP-P-HOCBF $h(\cdot)$.
        For a given function $\epsilon$ satisfying \eqref{eq: d_epsilon} and initial states $\bm{x}(t_0) \in C_1\!\cap\!\dots\!\cap\! C_m$, choose $\alpha^*_i$ for $i \in \{ 1, \dots, m\}$ as linear and satisfying \eqref{eqn_psi_star_derivative_definition} with
        \begin{align}
        \label{eq: gamma}
            \gamma_i(\psi, \bar{\eta}^2) \!=\! \alpha_{i}^{*-1} \!\circ\! \alpha_{i+1}^{*-1} \!\circ\! \dots \!\circ\! \alpha_m^{*-1} \Big(\frac{\epsilon(\psi) \bar{\eta}^2}{4} \Big).
        \end{align}
        Then, the input $\bm{u} \!\in\! K_s(\bm{x})$ for all $\bm{x} \!\in\! \mathbb{X}$ renders set $C_1\!\cap\!\dots\!\cap\! C_m$ forward invariant, indicating the safety of system \eqref{dynamic-disturbance}.
        \looseness=-1
    \end{thm}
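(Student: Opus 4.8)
The plan is to collapse membership in $K_s(\bm{x})$ into a single scalar differential inequality for $\psi^*_{m-1}$ along the closed-loop trajectory, to use the explicit $\gamma_m$ of \eqref{eq: gamma} together with the linearity of $\alpha^*_m$ to cancel the disturbance-induced term, and finally to cascade the chain $\psi_{m-1},\dots,\psi_0=h$ downward to obtain forward invariance of $C_1\cap\dots\cap C_m$; since $C_1=C$, this is the claimed safety.

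First I would differentiate $\psi^*_{m-1}$ along \eqref{dynamic-disturbance}, split the vector field into the learned part $\bm{\mu}(\bm{x})$ and the residual $\bm{d}(\bm{x})-\bm{\mu}(\bm{x})$, and bound the residual contribution by Cauchy--Schwarz and \eqref{eq: error-bound}, giving $\frac{\partial\psi^*_{m-1}}{\partial \bm{x}}(\bm{d}-\bm{\mu})\ge -\left\|\frac{\partial\psi^*_{m-1}}{\partial \bm{x}}\right\|\bar{\eta}$. Substituting any $\bm{u}\in K_s(\bm{x})$ from \eqref{eqn_Ks} then yields $\dot\psi^*_{m-1}\ge \frac{1}{\epsilon(\psi_{m-1})}\left\|\frac{\partial\psi^*_{m-1}}{\partial \bm{x}}\right\|^{2}-\left\|\frac{\partial\psi^*_{m-1}}{\partial \bm{x}}\right\|\bar{\eta}-\alpha^*_m(\psi^*_{m-1})$, and completing the square in the gradient norm removes those two terms, leaving $\dot\psi^*_{m-1}\ge -\frac{1}{4}\epsilon(\psi_{m-1})\bar{\eta}^{2}-\alpha^*_m(\psi^*_{m-1})$. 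By \eqref{eq: gamma} with $i=m$ one has $\alpha^*_m(\gamma_m(\psi_{m-1},\bar{\eta}^{2}))=\frac{1}{4}\epsilon(\psi_{m-1})\bar{\eta}^{2}$; since $\alpha^*_m$ is linear it is additive, and $\psi^*_{m-1}+\gamma_m(\psi_{m-1},\bar{\eta}^{2})=\psi_{m-1}$ by \eqref{eq: psi_0}, so the bound collapses to the clean CBF-type inequality $\dot\psi^*_{m-1}\ge -\alpha^*_m(\psi_{m-1})$.

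Second I would convert this into forward invariance of $C_m$. Writing $\psi^*_{m-1}=G(\psi_{m-1})$ with $G(s)=s-\gamma_m(s,\bar{\eta}^{2})$, condition \eqref{eq: d_epsilon} (so $\epsilon$ is non-increasing) together with $\alpha^*_m$ increasing gives $G'\ge 1$, hence $G$ is strictly increasing with locally Lipschitz inverse and $G(0)=-\gamma_m(0,\bar{\eta}^{2})\le 0$; in particular $\psi_{m-1}\ge 0\Leftrightarrow\psi^*_{m-1}\ge G(0)$, and $\bm{x}(t_0)\in C_m$ gives $\psi^*_{m-1}(t_0)\ge G(0)$. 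Using $\psi_{m-1}=G^{-1}(\psi^*_{m-1})$ the bound becomes $\dot\psi^*_{m-1}\ge -\rho(\psi^*_{m-1})$ with $\rho:=\alpha^*_m\circ G^{-1}$ locally Lipschitz and, crucially, $\rho(G(0))=\alpha^*_m(0)=0$. Thus $G(0)$ is an equilibrium of $\dot z=-\rho(z)$, the solution from $z(t_0)=\psi^*_{m-1}(t_0)\ge G(0)$ never leaves $[G(0),\infty)$, and the comparison lemma gives $\psi^*_{m-1}(t)\ge z(t)\ge G(0)$, i.e.\ $\psi_{m-1}(\bm{x}(t))\ge 0$ for all $t\ge t_0$. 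Then I would cascade: from $\psi_j(\bm{x}(t))\ge 0$ and the chain relation $\psi_j=\dot\psi_{j-1}+\alpha_j(\psi_{j-1})$ of \eqref{psi} one gets $\frac{d}{dt}\psi_{j-1}(\bm{x}(t))\ge -\alpha_j(\psi_{j-1}(\bm{x}(t)))$, so with $\bm{x}(t_0)\in C_j$ the standard comparison argument yields $\psi_{j-1}(\bm{x}(t))\ge 0$; running $j=m-1,\dots,1$ gives $h(\bm{x}(t))=\psi_0(\bm{x}(t))\ge 0$ and hence $\bm{x}(t)\in C_1\cap\dots\cap C_m$ for all $t\ge t_0$. (One may instead run the cascade on the $\psi^*_j$ via \eqref{eqn_psi_star_derivative_definition}: the telescoping form of \eqref{eq: gamma} yields the identity $\alpha^*_i(\gamma_i(0,\bar{\eta}^{2}))=\gamma_{i+1}(0,\bar{\eta}^{2})$ and $\Delta_i=0$ whenever $\psi_{i-1}=0$, which is exactly what keeps each $C^*_i$, and thus each $C_i$, invariant.)

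The step I expect to be the main obstacle is the degenerate boundary configuration in the second step: when $\psi_{m-1}(\bm{x}(t))=0$ the inequality only gives $\dot\psi^*_{m-1}\ge 0$ rather than a strictly inward-pointing vector field, so a bare Nagumo-type argument is delicate. The remedy is precisely the cancellation $\rho(G(0))=0$ engineered by the choice \eqref{eq: gamma}, combined with the monotonicity $G'\ge 1$ forced by \eqref{eq: d_epsilon}, which is what makes the comparison-lemma argument close. Secondary technical points are the well-posedness of $K_s$ (one needs $\epsilon>0$ so that $1/\epsilon$ is defined) and the implicit requirement that $\psi^*_{m-1}$ and its gradient be computable from $\bm{f},\bm{g},\bm{\mu}$ (the usual relative-degree condition on the disturbance); and---if the cascade is routed through \eqref{eqn_psi_star_derivative_definition}---verifying that the linear $\alpha^*_i$ and the $\gamma_i$ of \eqref{eq: gamma} are jointly consistent with \eqref{eqn_psi_star_derivative_definition}.
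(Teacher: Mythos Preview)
Your proposal is correct. The first phase---bounding the residual by Cauchy--Schwarz and \eqref{eq: error-bound}, completing the square (Young's inequality), and using the specific $\gamma_m$ from \eqref{eq: gamma} together with the linearity of $\alpha^*_m$ to collapse to $\dot\psi^*_{m-1}\ge -\alpha^*_m(\psi_{m-1})$---is exactly the paper's argument; your packaging via $G(s)=s-\gamma_m(s,\bar\eta^2)$ and $G'\ge 1$ is the same monotonicity observation the paper uses when it divides by $1-\partial\gamma_m/\partial\psi_{m-1}$.

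The cascade is where you diverge. The paper runs the induction through the starred chain \eqref{eqn_psi_star_derivative_definition}: from $\psi^*_{m-1}\ge -\gamma_m$ it extracts $\dot\psi^*_{m-2}\ge -\alpha^*_{m-1}(\psi^*_{m-2})-\gamma_m(\psi_{m-2},\bar\eta^2)$, then repeats the linearity/shift argument at every level, obtaining at each step both $\psi_i\ge 0$ and the ISSf-type estimate $\psi^*_i\ge -\gamma_{i+1}$. You instead observe that once $\psi_{m-1}\ge 0$ is secured, the \emph{original} HOCBF identities \eqref{psi}, $\psi_j=\dot\psi_{j-1}+\alpha_j(\psi_{j-1})$, already give $\dot\psi_{j-1}\ge -\alpha_j(\psi_{j-1})$ and hence the standard descent to $h\ge 0$. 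Your route is shorter and avoids the $\Delta_i$ bookkeeping; the paper's route has the by-product of quantifying how far inside each $C^*_i$ the trajectory stays, which is informative for the discussion of shrinkage but not required for the theorem as stated.

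One small correction to your parenthetical alternative: $\Delta_i(\bm{x})=\max\{0,\gamma_{i+1}(\psi_i,\bar\eta^2)-\gamma_{i+1}(\psi_{i-1},\bar\eta^2)\}$ depends on both $\psi_i$ and $\psi_{i-1}$, so it need not vanish at $\psi_{i-1}=0$; what actually drives the paper's induction is the telescoping identity $\alpha^*_i(\gamma_i)=\gamma_{i+1}$ (valid for all arguments, not only at zero). Since your main argument does not use that route, this does not affect your proof.
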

	\begin{proof}
		As the control input $\bm{u} \in K_s(\bm{x})$ for all $\bm{x}\in \mathbb{X}$ with disturbance $\bm{d}(\bm{x})$ estimated by GP, the time derivative of $\psi^*_{m-1}$ is written as
		\begin{align}
			\label{eq: dot_psi*}
				\dot{\psi}^*_{m-1}=&\frac{\partial\psi^*_{m-1}}{\partial \bm{x}}(\bm{f}(\bm{x})+\bm{g}(\bm{x})\bm{u}+\bm{d}(\bm{x})) \nonumber \\
				\geq& \frac{\partial\psi^*_{m-1}}{\partial \bm{x}}(\bm{f}(\bm{x})+\bm{g}(\bm{x})\bm{u}+\bm{\mu}(\bm{x}) ) \nonumber \\
                &~~~~~~~~~~~~ -\Big\| \frac{\partial\psi^*_{m-1}}{\partial \bm{x}}\Big\| \Big\|\bm{d}(\bm{x})-\bm{\mu} (\bm{x}) \Big\| \\
				\geq& \frac{1}{\epsilon(\psi_{\max})} \Big\|\frac{\partial\psi^*_{m-1}}{\partial \bm{x}} \Big\|^2 -\alpha^*_m(\psi^*_{m-1}) -\Big\| \frac{\partial\psi^*_{m-1}}{\partial \bm{x}} \Big\| \bar{\eta}. \nonumber
		\end{align}
        where the second inequality applies the prediction error bound of GP in \cref{lemma: error-bound}.
        Then, by Young's inequality with a positive $\epsilon(\psi_{m-1}(\bm{x}))$, \eqref{eq: dot_psi*} becomes the following form as
			\begin{align}\label{eq: d_psi0}
				\dot{\psi}^*_{m-1}\geq& \frac{1}{\epsilon(\psi_{m-1})} \Big\| \frac{\partial\psi^*_{m-1}}{\partial \bm{x}} \Big\|^2 -\alpha^*_m(\psi^*_{m-1})  \nonumber \\
                & - \Big( \frac{1}{\epsilon(\psi_{m-1})} \Big\| \frac{\partial\psi^*_{m-1}}{\partial \bm{x}} \Big\|^2 + \frac{\epsilon(\psi_{m-1})\bar{\eta}^2}{4} \Big)  \\
				=& -\alpha^*_m(\psi^*_{m-1})-\frac{\epsilon(\psi_{m-1})\bar{\eta}^2}{4} \nonumber
			\end{align}
		
		According to the definition $\psi^*_{m-1}=\psi_{m-1}-\gamma_m(\psi_{m-1}, \bar{\eta}^2)$ and $\gamma_m(\psi_{m-1}, \bar{\eta}^2)=\alpha_m^{*-1}(\frac{\epsilon(\psi_{m-1})\bar{\eta}^2}{4})$ in \eqref{eq: gamma}, the time derivative of $\psi^*_{m-1} = \psi_{m-1} - \gamma_m(\psi_{m-1}, \bar{\eta}^2)$ satisfies
			\begin{align}
				\Big(1-\frac{\partial \gamma_m}{\partial \psi_{m-1}}\Big) \dot{\psi}_{m-1} \geq& -\alpha^*_m(\psi_{m-1}-\gamma_m(\psi_{m-1}, \bar{\eta}^2)) \nonumber \\
                &-\frac{\epsilon(\psi_{m-1})\bar{\eta}^2}{4} \\
				=& -\alpha^*_m(\psi_{m-1}) \nonumber
			\end{align}
        by using \cref{lemma: cbf}, which results in $\psi_{m-1}(\bm{x}(t)) \ge 0$ for $\forall t \ge 0$. 
        Furthermore, considering $\psi^*_{m-1} = \psi_{m-1} - \gamma_m(\psi_{m-1}, \bar{\eta}^2)$, it yields
		\begin{align}
			\label{eq: psi*}
			\psi^*_{m-1}\geq -\alpha_m^{*-1} \Big( \frac{\epsilon(\psi_{m-1})\bar{\eta}^2}{4} \Big) = - \gamma_m(\psi_{m-1}, \bar{\eta}^2).
		\end{align}    
		Based on the definition of $\psi^*_{m\!-\!1} \!\leq\! \dot{\psi}^*_{m\!-\!2} \!+\! \alpha^*_{m\!-\!1}(\psi^*_{m\!-\!2})+ \Delta_{m-1}(\bm{x})$ , it has
		\begin{align}
			\label{eq: d_psi3}
			\dot{\psi}^*_{m-2}\geq -\alpha^*_{m-1}(\psi^*_{m-2})- \gamma_m(\psi_{m-2}, \bar{\eta}^2).
		\end{align}
		Noting that (\ref{eq: d_psi3}) is similar to (\ref{eq: d_psi0}), and then the process from \eqref{eq: d_psi0} to \eqref{eq: psi*} can be used as induction step for $\psi^*_{i}(\bm{x}), i={0, \dots, m-2}$, where the known information is 
			\begin{align}
				\dot{\psi}^*_i
				&\geq -\alpha^*_{i+1}(\psi^*_{i})-\gamma_{i+2}(\psi_{i}, \bar{\eta}^2)
			\end{align}
		Similarly, applying the linearity of $\alpha^*_{i}(\cdot)$ and $\psi^*_{i}(\bm{x}) = \psi_{i}(\bm{x}) - \gamma_{i+1}(\psi_{i}(\bm{x}), \bar{\eta}^2)$, it has
		\begin{equation}
			\label{eqn_dot_psi_i}
			\begin{aligned}
				\dot{\psi}_{i}(\bm{x}) \ge& - \Big(1 - \frac{\partial \gamma_{i+1}(\psi_{i}(\bm{x}))}{\partial \psi_{i}(\bm{x})} \Big)^{-1}\Big(\alpha^*_{i+1}( \psi_{i}(\bm{x}) ) \\
				&+ \alpha^*_{i+1}( \gamma_{i+1}(\psi_{i}(\bm{x}), \bar{\eta}^2) ) - \gamma_{i+2}(\psi_{i}(\bm{x}), \bar{\eta}^2)\Big) \\
				\ge& - \Big(1 - \frac{\partial \gamma_{i+1}(\psi_{i}(\bm{x}))}{\partial \psi_{i}(\bm{x})} \Big)^{-1}\alpha^*_{i+1}( \psi_{i}(\bm{x}) ),
			\end{aligned}        
		\end{equation}
		which results in $\psi_{i}(\bm{x}(t)) \ge 0$ and $\psi^*_{i}(\bm{x}(t)) \ge - \gamma_{i+1}(\psi_{i}(\bm{x}(t)), \bar{\eta}^2)$ for $\forall t\geq t_0$, due to the fact that the negative form of the right hand side in \eqref{eqn_dot_psi_i} is an extended class $\mathcal{K}$ function.
		Combining the base case and induction step, it is straightforward to see $\bm{x}(t) \in C_1\cap\dots\cap C_m$ for $\forall t\geq t_0$, which concludes the proof.   
	\end{proof}
	
    Note that the difference between $\psi_i$ and $\psi^*_i$, i.e., $\gamma_i = \psi_i - \psi^*_i$ from \eqref{eq: psi_0}, determines the shrunk of the safe set from $C_i$ to $C^*_i$ for $i=\{0, \dots, m-1\}$.
    While \eqref{eq: psi_0} defines the shrunk starting from $\psi_0(\cdot) = h(\cdot)$, it is possible for any $i \in \{0, \dots, m-1\}$ to keep the original constraints for $j = 0, \dots, i-1$ and only to shrink the safe set after $i$, i.e.,
    \begin{align*}
		\psi^*_{j}(\bm{x})=\begin{cases}
			\psi_{j}(\bm{x}), & \text{   if } j<i \\
			\psi_{j}(\bm{x})-\gamma_{j+1}(\psi_j, \bar{\eta}^2), &\text{   otherwise}   
		\end{cases}.
	\end{align*}
    Such constructions provide more flexibility to choose which order of safe constraint reflected by $\psi_i$ to be shrunk to ensure identical original safety of the system.
    Take the second-order Euler-Lagrange system as an example with $\bm{x} = [\bm{q}^T, \dot{\bm{q}}^T]^T$ with safety constraint $h(\bm{q})$, where $\bm{q}$ and $\dot{\bm{q}}$ denote the generalized position and velocity respectively.
    When it requires a larger motion area indicating less or no shrunk of the position-related constraint $h(\bm{x})$, it is possible to add more limitations on the possible domain of the velocity $\dot{\bm{q}}$, such that the velocity-related $\psi_1$ is shrunk to compensate the effects of disturbance and guarantee the safety. 
	
	\begin{rmk}
    \label{rmk1}
		Noting that the condition on $\epsilon(\psi_{m-1})$ in (\ref{eq: d_epsilon}) is stronger than necessary. In particular, the derivation of $\epsilon(\psi_{m-1})$ only needs to satisfy that
		\begin{align}
			\label{eq: rmk3}
			\frac{\mathrm{d}\epsilon(\psi_{m-1})}{\mathrm{d}\psi_{m-1}} \leq \frac{4}{\bar{\eta}^2}\frac{1}{D(\epsilon(\psi_{m-1})\bar{\eta}^2/4)},
		\end{align}
		where $D(\epsilon(\psi_{m-1})\bar{\eta}^2/4)=\frac{\mathrm{d}\alpha_m^{-1}(\epsilon(\psi_{m-1})\bar{\eta}^2/4)}{\mathrm{d} (\epsilon(\psi_{m-1})\bar{\eta}^2/4)}$. When $\bar{\eta} \to 0$ indicating no prediction error, the right side of \eqref{eq: rmk3} approaches $\infty$, which makes $\epsilon(\cdot)$ unconstrained. 
        Therefore, the condition \eqref{eq: d_epsilon} is set. 
        Moreover, another advantage of the condition \eqref{eq: d_epsilon} is, that the overcompensation inside the set $C_i$ is prevented as the $\epsilon(\psi_{m-1})$ has a smaller value when $\psi_{m-1} \!\gg\! 0$, which effectually decreases the control performance.
        \looseness=-1
	\end{rmk}
	
	\begin{rmk}
    \label{rmk2}
		Noting that when $\frac{\mathrm{d}\epsilon}{\mathrm{d}\psi_{m-1}}= 0$, the function $\epsilon(\psi_{m-1})$ returns a positive constant. 
        As a result, a large constant imposes strong restrictions between $\psi_{m-1}$ and $\psi^*_{m-1}$, while a small constant leads to a large $\frac{1}{\epsilon(\psi_{m-1})} \|\frac{\partial\psi^*_{m-1}}{\partial \bm{x}}\|^2$ such that overcompensation and conservatism performance. 
        Therefore, selecting $\frac{\mathrm{d}\epsilon}{\mathrm{d}\psi_{m-1}}(\psi_{m-1}(\bm{x}))\neq 0$ is a more flexibility method in designing controllers. 
	\end{rmk}
	
	Based on \cref{theorem}, the following quadratic program (QP) subject to condition \eqref{eqn_Ks} is usually utilized to generate an optimization-based controller as
		\begin{align} \label{eq: qp}
			&\bm{u}^*=\mathop{\arg \min}\nolimits_{\bm{u} \in \mathbb{U}} \| \bm{u}-\bm{u}_{\text{nom}} \|^2\\
			\mathrm{s.t.} ~~&\frac{\partial\psi^*_{m-1}}{\partial \bm{x}} \big(\bm{f}(\bm{x}) \!+\! \bm{g}(\bm{x})\bm{u} \!+\! \bm{\mu}(\bm{x}) \big) \!-\! \frac{1}{\epsilon(\psi_{m-1})} \Big\|\frac{\partial\psi^*_{m-1}}{\partial \bm{x}} \Big\|^2 \nonumber \\
			&~~~~~~~~~ +\alpha^*_m(\psi^*_{m-1}(\bm{x})) \geq 0, \nonumber
		\end{align}
	where $\bm{u}_{\text{nom}} \in \mathbb{U}$ is a nominal controller.
    The control inputs $\bm{u} = \bm{u}^*$ obtained by \eqref{eq: qp} ensure the safety of system \eqref{dynamic-disturbance}.

    \begin{rmk}
    \label{rmk3}
        In \eqref{eq: qp}, the term $\frac{1}{\epsilon(\psi_{m-1})} \|\frac{\partial\psi^*_{m-1}}{\partial \bm{x}}\|^2$ can be decreased by increasing the function $\epsilon(\psi_{m-1})$. Compared to robust HOCBF via fully compensating disturbance, the proposed method in this paper is more flexible to deal with the feasibility of QP in \eqref{eq: qp}. 
        However, a large $\epsilon(\psi_{m-1})$ shrinks the safe set so much that the performance conservatism increases. 
        Therefore, the trade-off between the feasibility of QP and performance conservatism should be selected according to the practical application, which is left as future works.   
        \looseness=-1
    \end{rmk}
 
	\section{SIMULATION}
	To verify the efficacy of the proposed method, a simulation is conducted on a virtual Franka Emika manipulator with 7 degrees of freedom (DOF) in the open-source robot simulation software Coppeliasim, which closely simulates the physical robot and faithfully replicates real-world scenarios encountered in robot manipulations. 
    To provide greater specificity, the simulation task is tailored to achieve obstacle avoidance with the manipulator’s end-effector while tracking a predetermined trajectory. The obstacle is configured as a sphere, with a radius of $r=0.02m$, situated at the coordinates $[x_0, y_0, z_0]=[0.295, 0.038, 0.458]m$. 
	Regarding to the Franka Emika manipulator, the dynamic model is  
	\begin{equation}
		\label{robot}
		\bm{M}(\bm{q})\ddot{\bm{q}}+\bm{C}(\bm{q},\dot{\bm{q}})\dot{\bm{q}}+\bm{G}(\bm{q}) + \bm{d}(\bm{x}) = \bm{u},
	\end{equation}
	where $\bm{q}, \dot{\bm{q}}, \ddot{\bm{q}} \in \mathbb{R}^7$ denote the angle, velocity and acceleration of joints respectively, and $\bm{x} = [\bm{q}^T, \dot{\bm{q}}^T]^T$.
    The matrices $\bm{M}(\bm{q}) \in \mathbb{R}^{7\times 7}$, $\bm{C}(\bm{q},\dot{\bm{q}})\in \mathbb{R}^{7\times 7}$ and $\bm{G}(\bm{q})\in \mathbb{R}^{7}$ are inertia matrix, Coriolis-centrifugal matrix and gravitational term obtained from the software, respectively.
    The unknown function $\bm{d}(\bm{q}, \dot{\bm{q}}) \in \mathbb{R}^7$ is the disturbance, which is in the form $\bm{d}(\bm{x}) = 0.2 ( \bm{C}(\bm{q},\dot{\bm{q}}) \dot{\bm{q}} + \bm{G}(\bm{q})$.
	The constraints associated with this task are succinctly expressed through the continuously differentiable function $h(\bm{q}): \mathbb{R}^7 \to \mathbb{R}$ as
	\begin{equation}
		h(\bm{q})=(x(\bm{q})-x_0)^2+(y(\bm{q})-y_0)^2+(z(\bm{q})-z_0)^2-r^2,
	\end{equation}
	where $ x(\bm{q}) $, $ y(\bm{q}) $ and $ z(\bm{q}) $ are the coordinates of end-effector. Obviously, the relative degree of $h(\bm{q})$ is $m=2$. 
	
	In this simulation, the initial states of robot is $ \bm{q}(0)=[0, -\pi/4, 0,$ $ -3\pi/4, 0, 3\pi/4, \pi/4]^Trad $, and sampling time is $0.001s$. 
    The nominal controller is a PD controller as $\bm{u}_{\text{nom}}=30(\bm{q}-\bm{q_d})+15(\dot{\bm{q}}-\dot{\bm{q}}_{\bm{d}})$, where $\bm{q_d}, \dot{\bm{q}}_{\bm{d}}$ are desired joint angle and velocity respectively. 
    To compensate for the disturbance, GP is chosen with squared-exponential kernel function $k(\bm{x}, \bm{x}')=\exp(-\| \bm{x}-\bm{x}' \|^2/2l^2)$ with $l=1$. 
    The number of training data points is $M = 400$, and the boundary of measurement noise is $\sigma_v=0.02$. 
    Given the kernel function and training data, the computed upper bound of the estimation error is $\bar{\eta}^2=12.68$. 
    The extended class $\mathcal{K}$ functions are set as $\alpha_1(h)=5h$ and $\alpha_2(\psi_1)=10\psi_1$. The function $\epsilon(\psi_i)=\frac{\epsilon_0}{\mathrm{exp}(\lambda \psi_i)}, i=0, 1$ with $\epsilon_0 = 1, \lambda \in \mathbb{R}_{0,+}$ is set to construct the function $\psi^*_i$. Based on these parameter settings, the simulation result is shown as follows.

    \cref{fig1} shows the trajectories of end-effector and the curves of $h(\bm{q})$ under different control methods. When the original HOCBF without considering the prediction error is used to solve the control input, safety cannot be ensured. In contrast, the proposed GP-P-HOCBF with different functions $\epsilon(\psi_i), i=0, 1$ ensures safety and exhibits varying degrees of performance conservatism. When $\epsilon(\psi_i), i=0, 1$ is a constant, i.e., $\lambda=0$, the performance conservatism obviously increase compared to the case where $\epsilon(\psi_i), i=0, 1$ depends on the $\psi_i$, i.e., $\lambda=100$, which aligns with the concept discussed in \cref{rmk2}. This trend is further elucidated in \cref{fig2}, which shows the original safety constraint function $h(\bm{q})$ and the shrunk safety constraint function $h^*(\bm{q})$ with different $\lambda$. Regarding to GP-rHOCBF in \cite{wang2018safe}, although the performance conservatism involved in this approach is small, ensuring the feasibility of the quadratic program (QP) may be compromised in the presence of input constraints. This issue is highlighted in \cref{fig3}, where the control input of the second joint solved by GP-rHOCBF violates the input constraint. It sufficiently illustrates the proposed method is more flexible to deal with the feasibility of the QP problem. 
    \begin{figure}
	\centering
	\subfigure[]{	
		\begin{minipage}[t]{0.9\linewidth}				
			\includegraphics[width=\columnwidth]{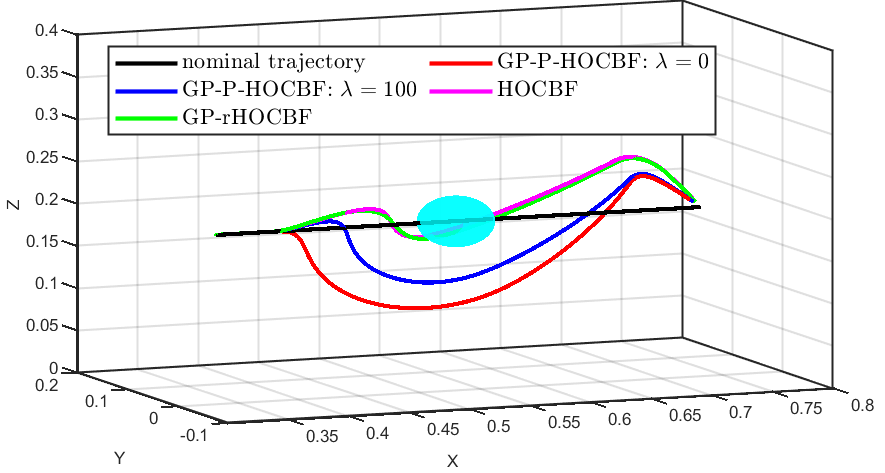}		
		\end{minipage}
	}
	\subfigure[]{
		\begin{minipage}[t]{0.9\linewidth}				
			\includegraphics[width=\columnwidth]{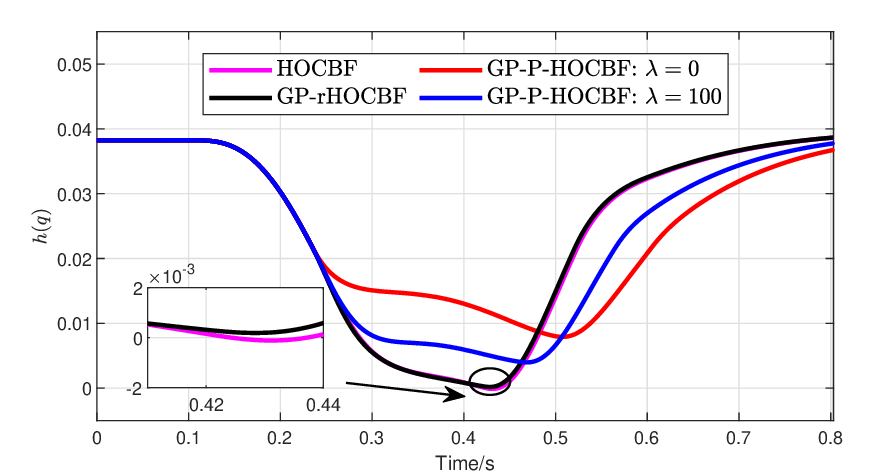}			
		\end{minipage}
	}
    \vspace{-0.3cm}
    \caption{The trajectory of end-effector and safety constraint function $h(\bm{q})$. (a)The trajectory of end-effector under different controllers. (b)The curves of $h(\bm{q})$ under different controllers.}
    \vspace{-0.3cm}
    \label{fig1}
    \end{figure}
    
    \begin{figure}
        \centering
        \includegraphics[width=\linewidth]{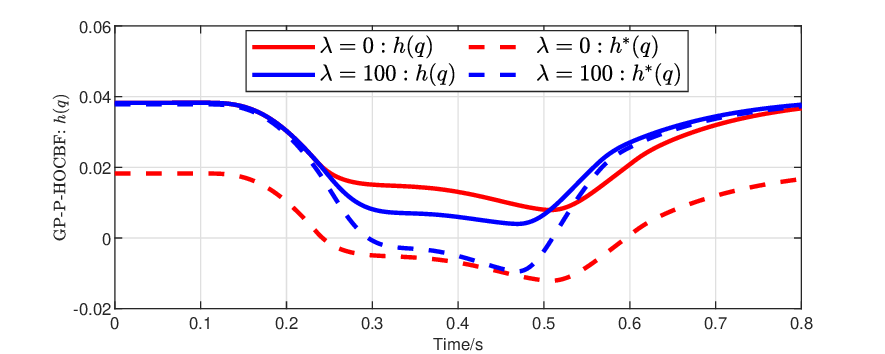}
        \vspace{-0.3cm}
        \caption{The curves of $h(\bm{q})$ and $h^*(\bm{q})$ with different $\lambda$.}
        \vspace{-0.3cm}
        \label{fig2}
    \end{figure}
	\begin{figure}
	    \centering
	    \includegraphics[width=\linewidth]{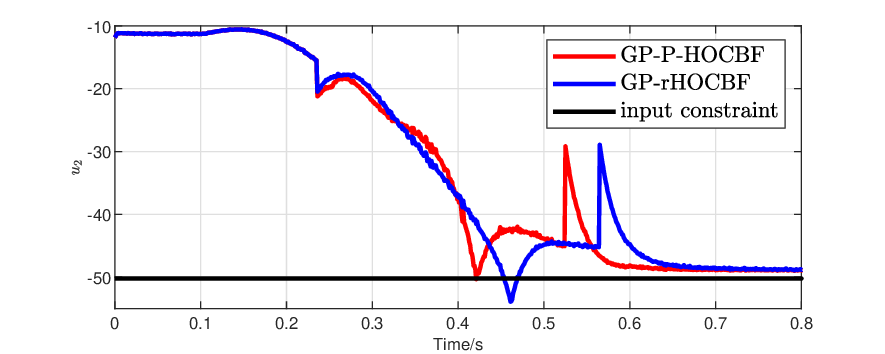}
        \vspace{-0.3cm}
	    \caption{The control input of the second joint $\bm{u}_2$ solved by different control methods.}
        \vspace{-0.3cm}
	    \label{fig3}
	\end{figure}

	\section{CONCLUSIONS}
	In this paper, the concept of Gaussian Process-based parameterized high order control barrier function (GP-P-HOCBF) is proposed to ensure the safety of systems with unknown disturbance. The disturbance is estimated by Gaussian process regression, whose prediction error characterized by the deterministic error bound is compensated by designing parameterized safe sets from the original safety constraints. The proposed GP-P-HOCBF ensures the safety of the original safety requirements, while providing more flexibility to deal with the conservatism and feasibility of solving the GP-P-HOCBF-based QP problem. Finally, the effectiveness of the proposed method is validated through simulation on the Franka Emika manipulator.



	\bibliographystyle{IEEEtran}
	\bibliography{ref}
	
	
\end{document}